\newtheorem{theorem}{Theorem}
\begin{document}

\preprint{APS/123-QED}

\title{Dynamics and Computation in Linear Open Quantum Walks}

\author{Pedro Linck Maciel}
 \email{Contact author: pedro.linck@ufpe.br}
 \affiliation{ Departamento de Física, Centro de Ciências Exatas e da Natureza, Universidade Federal de Pernambuco, Recife-PE, Brazil}
\author{Nadja K. Bernardes}%
 \email{Contact author: nadja.bernardes@ufpe.br}
 \affiliation{ Departamento de Física, Centro de Ciências Exatas e da Natureza, Universidade Federal de Pernambuco, Recife-PE, Brazil}



\date{\today}

\begin{abstract}
Open Quantum Walks (OQW) are a type of quantum walk governed by the system's interaction with its environment. We explore the time evolution and the limit behavior of the OQW framework for Quantum Computation and show how we can represent random unitary quantum channels, such as the dephasing and depolarizing channels, in this model. We also develop a simulation protocol with circuit representation for this model, which is heavily inspired by the fact that graphs represent OQW and are, thereby, local (meaning that the state in a particular node interacts only with its neighborhood). We obtain asymptotic advantages in the system's dimension, circuit depth, and CNOT count compared to other simulation methods.

\end{abstract}

\maketitle


\section{\label{sec:level1}Introduction}
Quantum computation has emerged as a promising research area with several practical applications, capable of processing information much faster than classical computers and with fewer physical resources \cite{QCI,shor97,grover96}. However, quantum computation has its limitations. The experimental development of practical and scalable quantum computers that are robust to noise and decoherence is still a challenge. No physical system is fully isolated, but many can be approximated as isolated systems. This is not the case with most quantum systems: they are very fragile and, therefore, very susceptible to environmental interactions, leading to loss of information and reducing the possibility of performing complex computational tasks.
\par The traditional approach to solving these problems is to use Quantum Error Correction, which is very demanding. A new approach to this problem has been proposed using the environment and its dissipation processes as the medium for performing the computation, known as Dissipative Quantum Computation. Several different approaches have been proposed to perform quantum computation using dissipative systems \cite{Cirac09, Petr12, Petr13, Qiang24, korkmaz20}.

\par An efficient way to implement dissipative quantum computing algorithms is to use the (discrete) Open Quantum Walk (OQW) framework \cite{Petr12}. This framework is modeled as a discrete and stochastic quantum walk \cite{Qiang24} on a graph in which the environment entirely operates the jump between nodes, changing the walker's location on the graph and its internal state. Different types of discrete and continuous quantum walks have been proposed for several applications such as cryptography \cite{abd19}, search algorithms \cite{QWSA}, simulation techniques \cite{lloyd08}, graph-theoretical questions \cite{Berry11, douglas08}, algebraic questions (checking matrix multiplication \cite{Buhrman06}, testing group commutativity \cite{magniez07},
subset sum \cite{becker11} and triangle finding \cite{magniez2007quantum}), and machine learning \cite{dernbach19}. Experimental results in the field of quantum walks and quantum walk based algorithms are also atracting much attention \cite{tang22,wang22,wang24,qu24,zheng24,Lorz19,bocanegra25,sahu24}. Many quantum walk models are coin-based models defined by a coin space and a coin operator that controls the evolution of the walker as the quantum analog of flipping a coin and deciding the path to take in a classical random walk \cite{QWSA}. In \cite{Petr12(1)}, it is shown that unitary quantum walks can be implemented as a modification of the OQW realization procedure. Therefore, Open Quantum Walks can be seen as a natural (and more realistic) extension of the most used models for applications, opening bright paths to be explored. As an example, a quantum algorithm for vertex ranking in graphs using discrete OQW has recently been proposed \cite{Dutta25}.

\par Quantum simulations are also an important area of quantum computation since many questions about complex quantum systems, such as the Fenna-Matthews-Olson protein complex \cite{Kais22} and the Avian Compass problem \cite{Kais23}, do not have an explicit analytical solution \cite{Kais24, Suri23}. Therefore, we also need efficient ways to simulate different types of quantum systems based on some general assumptions about the problem.

\par The main contribution of this work is to describe the evolution in time of this model of quantum computation, answer some theoretical questions (such as the steady state of the walk and an approximation for the running time of simulations and experiments) and propose an efficient algorithm concerning the system dimension, CNOT count, and depth (which we also generalize to a larger class of OQW models) to simulate this quantum computation model on perfect quantum computers. 

\par In Section \ref{sec:theoretical}, we study the time evolution of the OQW model under some restrictions. In Section \ref{sec:realization}, we show how to implement an important class of quantum channels (including dephasing and depolarizing channels) in this model. In Section \ref{sec:simulation}, we develop the simulation protocol based on graph locality and compare it with well-known simulation methods in the literature. We give our final considerations to the work in Section \ref{sec:conclusion}. The mathematical statements and proofs used in the text are in Appendix \ref{app:proofs}, and the description of the simulation protocol algorithm is in Appendix \ref{app:general}.

\newpage

\section{Theoretical considerations}
\label{sec:theoretical}
In this section, we explore some theoretical aspects of the evolution of open quantum walks. We derive the steady-state solution, approximate the evolution for a specific time interval, and estimate the number of steps required in the OQW model of quantum computation to design simulations and experiments.
\subsection{Open Quantum Walks}
A quantum walk is a type of quantum evolution that takes place on a graph. The vertices of the graph form the basis of possible places for the walker, and the edges give us the information about how the walker can jump from one node to another. When we talk about the dynamics of a quantum walk, we are talking about how the state of the system changes when we take steps. There is a zoo of quantum walks with the most distinguished properties \cite{QWSA}. 
In our case, we are interested in open quantum walks \cite{Petr12(1)}. The main idea behind open quantum walks is that we can jump from one node of the graph to another while changing the internal state of the walker by an arbitrary open quantum evolution, depending on the edge in which the jump is made. One can see a visual representation of this process in Fig. \ref{fig:OQW}. 

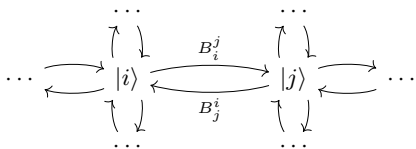
\begin{figure}[b]
\[\begin{tikzcd}[row sep=scriptsize]
	& \cdots && \cdots \\
	\cdots & {\ket{i}} && {\ket{j}} & \cdots \\
	& \cdots && \cdots
	\arrow[curve={height=-6pt}, from=1-2, to=2-2]
	\arrow[curve={height=-6pt}, from=1-4, to=2-4]
	\arrow[curve={height=-6pt}, from=2-1, to=2-2]
	\arrow[curve={height=-6pt}, from=2-2, to=1-2]
	\arrow[curve={height=-6pt}, from=2-2, to=2-1]
	\arrow["{B_i^j}", curve={height=-6pt}, from=2-2, to=2-4]
	\arrow[curve={height=-6pt}, from=2-2, to=3-2]
	\arrow[curve={height=-6pt}, from=2-4, to=1-4]
	\arrow["{B_j^i}", curve={height=-6pt}, from=2-4, to=2-2]
	\arrow[curve={height=-6pt}, from=2-4, to=2-5]
	\arrow[curve={height=-6pt}, from=2-4, to=3-4]
	\arrow[curve={height=-6pt}, from=2-5, to=2-4]
	\arrow[curve={height=-6pt}, from=3-2, to=2-2]
	\arrow[curve={height=-6pt}, from=3-4, to=2-4]
\end{tikzcd}\]\caption{\label{fig:OQW} An arbitrary open quantum walk can be represented by this visual diagram. If there is a omitted edge in a particular diagram, this means that the corresponding operator $B_i^j$ is zero.}
\end{figure}

\par The dynamics of an open quantum system with dimension $d$ can be described by a set of at most $d^2$ (not necessarily unitary) operators $\{K_i\}$, called Kraus operators \cite{QCI}. They transform the state $\rho$ into the state $\sum K_i \rho K_i^\dagger$ under the condition $\sum K_i^\dagger K_i = 1$. In the study of quantum computation, we deal with closed systems and unitary evolutions, so one question that naturally arises is: How can we simulate the evolution of open quantum systems using unitary operations?

    \par To formally define an open quantum walk, we need to define $\mathcal{H}$, the (internal) Hilbert space of the walker, and $\mathcal{G}$, the Hilbert space of the graph, where the walk is made. We mark a basis $\{\ket{i}\}_{i \in G}$ for the graph space, where $G$ is the set of vertices of the underlying graph. For each node $\ket{i}$ of the graph, we have a set of linear operators (not necessarily unitary nor positive) $\{B_i^j\}_{j \in G}$, $B_i^j\colon \mathcal{H} \to \mathcal{H}$ satisfying 
\begin{equation}
    \sum_j B_i^{j \dagger} B_i^j = \mathds{1}.
\end{equation}
\par The operator $B_i^j$ is the operator that corresponds to the internal evolution of the walker when jumping from node $\ket{i}$ to node $\ket{j}$. We need to define an operator that also carries the information of which edge is being jumped:
\begin{equation}
    M_i^j = B_i^j \otimes \ket{j}\bra{i}
\end{equation}
where $M_i^j$ are Kraus operators $M_i^j\colon \mathcal{H} \otimes \mathcal{G} \to \mathcal{H} \otimes \mathcal{G}$. It can be easily seen that
\begin{equation}
\sum_{i,j} M_i^{j \dagger} M_i^j = \mathds{1}. 
\end{equation}
\par Here, we denote $\mathcal{B}(\mathcal{V})$ as the convex space of the density matrices in the Hilbert space $\mathcal{V}$. Given a density matrix $\rho \in \mathcal{B}(\mathcal{H} \otimes \mathcal{G})$ to be the initial state of the walk, the following recursion relation governs the evolution of this system:
\begin{equation}
\label{eqn:recursiveoqw}
    \begin{cases}
      \rho^{[0]} = \rho \\
      \mathcal{M}(\rho^{[n]}) = \rho^{[n+1]} = \sum_{i,j} M_i^j \rho^{[n]} M_i^{j \dagger}
    \end{cases}       
\end{equation}
where $\rho^{[n]}$ is the state of the walk after $n$ steps. It was shown in Ref.\cite{Petr12(1)} that if $\rho = \sum_{i,j} \rho_{ij}\otimes \ket{i}\bra{j}$ for some positive trace-class matrices $\rho_{ij}$ satisfies $\sum_i\text{Tr}(\rho_{ii}) = 1$ then, after any $n \geq 1$ iterations, $\rho^{[n]} = \sum_i \rho_i \otimes \ket{i} \bra{i}$ for some $\rho_i$ determined by the evolution of the initial conditions and satisfying $\sum_i \text{Tr}(\rho_{i}) = 1$. Therefore, for studying the evolution of the system, we can assume, without loss of generality, that the initial condition has a diagonal form 
\begin{equation}
    \rho = \sum_i \rho_i \otimes \ket{i}\bra{i}.
\end{equation}
It is also important to note that the open quantum walk diagram in Fig.\ref{fig:OQW}, together with an initial state, is sufficient to determine the whole evolution. In a given graph diagram, if the edge drawn from $\ket{i}$ to $\ket{j}$ is marked with a given $C_i^j$, this can be translated to the OQW evolution given in Eq. \ref{eqn:recursiveoqw} as $M_i^j = C_i^j \otimes \ket{j}\bra{i}$. Theferore, given the graphic representation of all edges and operators between them (all omitted edges have corresponding Kraus operators equal to the zero operator), one can build all $M_i^j$. As a concrete example, in Fig. \ref{fig:linearOQW} we have $M_0^0 = \sqrt{\lambda} I \otimes \ket{0}\bra{0}$, $M_0^1 = \sqrt{\omega}U_0 \otimes \ket{1}\bra{0}$, $M_1^0 = \sqrt{\lambda} U_0^\dagger \otimes \ket{0}\bra{1}$, $M_1^1 = 0$ (if $N > 2$ there is no directed edge from $\ket{1}$ to $\ket{1}$), etc. for a given finite collection of unitaries $\{U_i\}$.

\subsection{Quantum Computation using OQW}
Here in this work, we will use a linear graph topology of OQW, where all $B_i^j$ operators are multiples of unitaries. In this work, the term topology refers to the graph topology, i.e., the connectivity of the graph. What we call computation here is the preparation of quantum states \cite{QWSA}, i.e., a sequence of quantum operations (not necessarily unitary) done in a determined quantum state to reach another quantum state. The definition of this model via its diagram representation is shown in Fig. \ref{fig:linearOQW}.
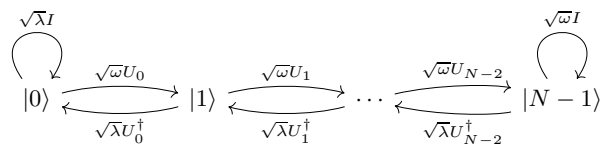
\begin{figure}[b]
\[\begin{tikzcd}
	{\ket{0}} && {\ket{1}} && \cdots && {\ket{N-1}}
	\arrow["{\sqrt{\lambda} I}", from=1-1, to=1-1, loop, in=55, out=125, distance=10mm]
	\arrow["{\sqrt{\omega} U_0}", curve={height=-6pt}, from=1-1, to=1-3]
	\arrow["{\sqrt{\lambda} U_0^\dagger}", curve={height=-6pt}, from=1-3, to=1-1]
	\arrow["{\sqrt{\omega} U_1}", curve={height=-6pt}, from=1-3, to=1-5]
	\arrow["{\sqrt{\lambda} U_1^\dagger}", curve={height=-6pt}, from=1-5, to=1-3]
	\arrow["{\sqrt{\omega} U_{N-2}}", curve={height=-6pt}, from=1-5, to=1-7]
	\arrow["{\sqrt{\lambda} U_{N-2}^\dagger}", curve={height=-6pt}, from=1-7, to=1-5]
	\arrow["{\sqrt{\omega} I}", from=1-7, to=1-7, loop, in=55, out=125, distance=10mm]
\end{tikzcd}\]
\caption{\label{fig:linearOQW} The diagram corresponding to the model of quantum computation based on OQW. Each $U_i$ is a unitary operator, and $\omega, \lambda \geq 0$ are such that $\omega + \lambda = 1$.}
\end{figure}
This diagram means that at node $\ket{i}$, we have the probability $\omega$ of jumping right and changing the internal state by applying $U_i$, and the probability $\lambda$ of jumping left and changing the internal state applying by $U_{i-1}^\dagger$ (except for the boundaries, where the state remains unchanged). The result of the computation will be detected (with some probability) in the last node in the long-term run. Each $U_i$ can be seen as a layer of a quantum circuit.
\par If we start at state $\ket{\psi}\bra{\psi} \otimes \ket{0} \bra{0}$, the state on the $n$-th iteration will be (using the convention $U_{-1} = \mathds{1})$
\begin{equation}
\label{eqn:niteration}
\rho^{[n]} = \sum_i p_{i}^{(n)} U_{i-1}\cdots U_0\ket{\psi}\bra{\psi}U_0^\dagger \cdots U_{i-1}^\dagger \otimes \ket{i}\bra{i}
\end{equation}
where $p_i^{(n)}$ is the probability of finding the walker in node $i$ after $n$ steps. Note that the only thing that changes from one iteration to another is the set of weights $p_i^{(n)}$. This means that no quantum walk effects can be seen observing only the probability distribution across the graph. Therefore, the evolution of the model can be described with an underlying classical random walk in this particular choice of OQW and initial state, in such a way that we recover a classical Markov chain \cite{Petr12(1)}. With that being said, we can then calculate some properties of this model based on classical random-walk techniques. We define $T$ as the transition matrix for this topology of OQW, i.e., $T_{ij}$ is the probability of jumping from node $j$ to node $i$:
\begin{equation}
T=\left[\begin{array}{cccccc}
\lambda & \lambda & 0 & & & \\
\omega & 0 & \lambda & &  & \vdots \\
0 & \omega & 0 & & & \vdots \\
\vdots & 0 & \omega & & & \vdots \\
\vdots & & 0 & \ddots & & \vdots \\
 &  &  & & \ddots &  \\
 &  & 
\end{array}\right]
\end{equation}
and the steady state $\vec{x}$ can be calculated by: 
\begin{equation}
\label{eqn: steadystate}
T \vec{x}=\vec{x} \implies x_m = \dfrac{a^m(a-1)}{a^N-1}
\end{equation}
where $a = \dfrac{\omega}{\lambda} = \dfrac{\omega}{1-\omega}$. Therefore, steady-state probabilities, viewed as a function $p = p(m) \propto a^m$, where $m$ is a node $1 \leq m \leq N-1$, have an exponential shape. A simulation is shown in Fig. \ref{fig:histogram}, where we compared the result of the iteration with $\omega = 2/3$ after $1000$ steps to the steady state calculated above, ensuring that they are practically equal at this step.

\begin{figure*}[tb]
\scalebox{0.7}{\input{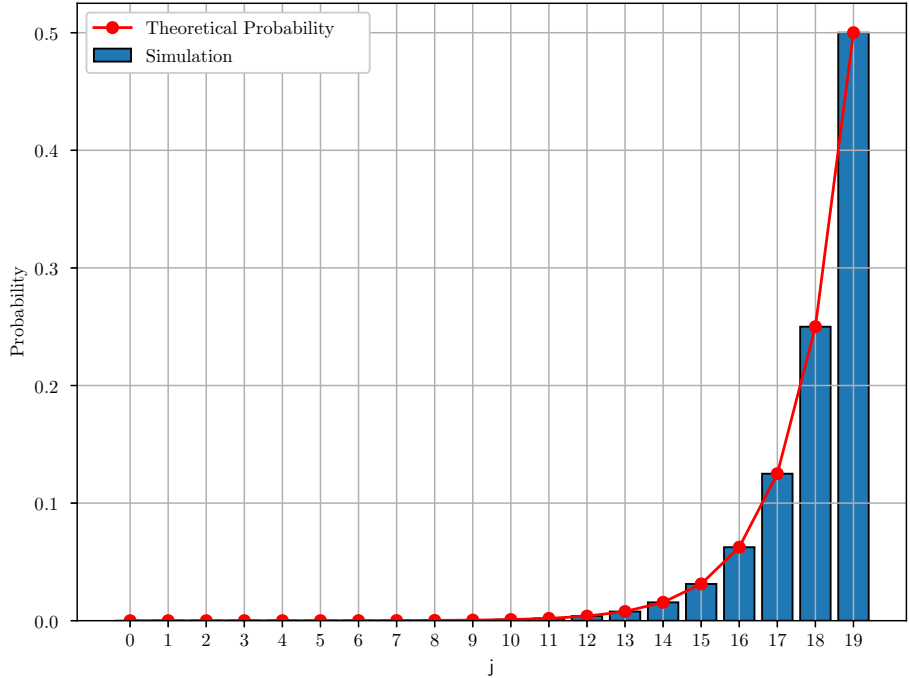}}
\caption{\label{fig:histogram} Comparison of the steady state solution of a linear OQW with its simulated values. The parameter are $N = 20$, $\omega = 2/3$ and $1000$ steps. The horizontal axis represents the nodes on the graph from 0 to 19; the vertical axis represents the probability of being at that node; the red dots represent the probability found theoretically in equation \ref{eqn: steadystate}, and the blue bins represent the frequency on a simulation of the walk.}
\end{figure*}
In this model, the computation is successful, by definition, if the result is detected at the last node, so we usually assume $\omega \geq \lambda$. In the long term, using Eq. (\ref{eqn: steadystate}), this occurs with probability
\begin{equation}
p_{suc} = \dfrac{a^{N-1}(a-1)}{a^N-1} = \dfrac{a^N - a^{N-1}}{a^N-1}.
\end{equation}
\par If we want $p_{suc} \geq \eta$ for a given $\eta > 0$, it is sufficient (see Appendix \ref{app:proofs}) to ask for 
\begin{equation}
    \omega \geq \dfrac{1}{2-\eta}.
\end{equation}
\par Note that this formula does not depend on the number of nodes $N = |G|$ ($G$ denotes the underlying graph). As an example, if we want at least $1/2$ probability of success in the long term, it is sufficient to ask for $\omega \geq 2/3$ (which occurs in Fig. \ref{fig:histogram}).
\par Suppose now that $N=2$ and that the initial state is
\begin{equation}
    \rho^{[0]} = p \rho_0 \otimes \ket{0}\bra{0} + (1-p) \rho_1 \otimes\ket{1}\bra{1}
\end{equation}
where $\rho_0,\rho_1 \in \mathcal{B}(\mathcal{H})$. After one single step, we have
\begin{equation}
\begin{split}
\rho^{[1]} & = [\lambda p \rho_0 + (1-p)\lambda U^{\dagger} \rho_1 U]\otimes \ket{0}\bra{0} \\
& + [\omega p U\rho_0 U^\dagger + \omega (1-p) \rho_1] \otimes \ket{1}\bra{1}.
\end{split}
\end{equation}
It can be shown (see Appendix \ref{app:proofs}) that, in this case, we always have
\begin{equation}
    \rho^{[n]} = \rho^{[1]}, \forall n \geq 1
\end{equation}
meaning that after just one step, the steady state is reached.
\subsection{Time evolution of linear OQW}

Now we proceed to approximate the solution for $p_m^{(n)}$ in Eq. \ref{eqn:niteration}, where $0<m<N-1$. Suppose $P(m,n)$ is the probability of reaching the $m$th node in $n$ steps. Then we have, by the definition of the OQW:
\begin{equation}
\label{mastereq}
P(m,n+1) = \omega P(m-1,n) + \lambda P(m+1,n)
\end{equation}
\par We are going to expand this equation in Taylor series, supposing that $N>>1$ in such a way that we can take the continuum limit \cite{SP,HSM}. Substituting
\begin{equation}
\begin{cases}
P(m \pm \Delta m,n) = P(m,n) \pm \partial P/\partial m + \partial^2 P/\partial m^2 \\
 P(m, n \pm \Delta n) = P(m,n) \pm \partial P/\partial n
\end{cases}
\end{equation}
in Eq. \ref{mastereq}, we obtain the differential equation 
\begin{equation}
    \dfrac{\partial P}{\partial n} = -v\dfrac{\partial P}{\partial m} + D\dfrac{\partial^2 P}{\partial m^2} 
\end{equation}
which has the following solution:
\begin{equation}
    P(m,n) = \dfrac{1}{\sqrt{4\pi D n}}e^{-(m-vn)^2/4Dn},
\end{equation}
where $D = 1/2$ is the diffusion coefficient and $v = \omega - \lambda = 2\omega - 1$ is the drift velocity. This solution is a normal distribution that moves in a line with velocity $v$ and has a dispersion rate $D$. This is true for a certain period, after which the distribution begins to deform into the exponential shape of the steady-state solution. \\
\par The time that the distribution takes to reach the right boundary is 
\begin{equation}
    n_{steps} = \dfrac{N}{v} = \dfrac{N}{2\omega - 1}.
\end{equation}
\par We can then estimate that after the $n_{steps}$ steps, we can finish our computation since this is the step in which the mean value of the distribution reaches the node that is important for the computation to be successful. A graph of the probability distribution on the nodes for several different times of the evolution can be seen in Fig. \ref{fig:probdist}, with $\omega = 2/3, \; N = 100$, where we can see that our approximate solution is valid for a certain time interval. Our estimated computation time for this case is $n_{steps} = 300$, which gives us $p_{99}^{(300)} \approx 0.3$, which is a good probability of success in the design of experiments and simulations.

\par It is also interesting to note that 
\begin{equation}
    \omega \geq \dfrac{1}{2 - \eta} \implies n_{steps} \leq \dfrac{N(2-\eta)}{\eta}.
\end{equation} 
We also note that if we start with a state $\sum_i q_i \ket{\psi_i}\bra{\psi_i} \otimes \ket{i}\bra{i}$, this estimate still holds, since the state already starts dispersed through the nodes, and it is easier to reach the final node in this way than starting condensed at the initial node.

\begin{figure*}[tb]
\scalebox{0.7}{\input{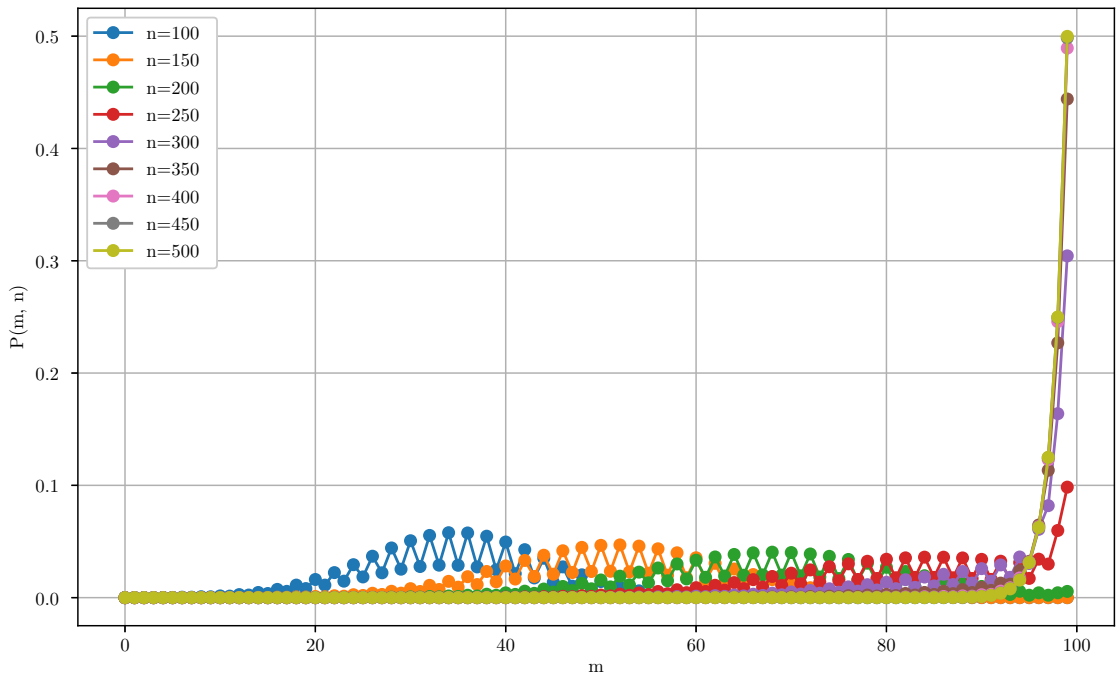}}
\caption{\label{fig:probdist} Simulation of the evolution of an OQW with $N = 100$, $\omega = 2/3$. The horizontal axis represents the nodes of the graph, from 0 to 99. The vertical axis represents the probability that the walker is on the given node after $n$ steps of the walk. We simulate for $n$ between 100 and 500, with intervals of 50 steps between each curve.}
\end{figure*}

\section{Realization of Quantum Channels}
\label{sec:realization}
Let us show how to implement any convex combination of unitaries in the OQW model of quantum computation and derive efficient ways to implement the dephasing and depolarizing quantum channels on OQW.
\subsection{Convex Combination of Unitaries}
\label{subsec:convexcomb}
We start with a Theorem concerning the recursive relations of the state after performing $n$ steps in the OQW: 
\begin{theorem}
\label{thm:recursive}
If the initial state in the OQW is of the form 
\begin{equation}
    \sum_i \rho_i\otimes \ket{i}\bra{i},
\end{equation}
then after $n$ steps, we have a state of the form
\begin{equation}
    \sum_i [a_0^{(i,n)}\rho_0^{(i)} +\cdots + a_{N-1}^{(i,n)}\rho_{N-1}^{(i)}]\otimes \ket{i}\bra{i}
\end{equation}
where
\begin{equation}
\begin{cases}
a_j^{(i,n+1)} = \omega a_j^{(i-1,n)} + \lambda a_j^{(i+1,n)} & \text{if $0 < i < N-1$}  \\
a_j^{(0,n+1)} = \lambda a_j^{(0,n)} + \lambda a_j^{(1,n)} & \text{if $i = 0$} \\
a_j^{(N-1,n+1)} = \omega a_j^{(N-2,n)} + \omega a_j^{(N-1,n)} & \text{if $i = N-1$} 
\end{cases}
\end{equation}
and
\begin{equation}
\rho_j^{(i)} = \begin{cases} \rho_j  & \text{if $i = j$} \\
U_{[j,i]}\rho_j U_{[j,i]}^\dagger & \text{if $i > j$} \\
U_{[i,j]}^\dagger \rho_j U_{[i,j]} & \text{if $j > i$} 
\end{cases}
\end{equation}
where we defined $U_{[i,j]} = U_j U_{j-1}\dots U_{i+1}U_i$.
\end{theorem}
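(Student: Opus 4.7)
The plan is to prove Theorem \ref{thm:recursive} by induction on the number of steps $n$. For the base case $n=0$, the claimed form reduces to the initial state $\sum_i \rho_i \otimes \ket{i}\bra{i}$ provided we set $a_j^{(i,0)} = \delta_{ij}$, so that $\sum_j a_j^{(i,0)} \rho_j^{(i)} = \rho_i^{(i)} = \rho_i$. This is the natural initialization from which the asserted recursion will propagate.

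For the inductive step, I would apply the map $\mathcal{M}$ from Eq.\ \ref{eqn:recursiveoqw} to the diagonal state $\rho^{[n]} = \sum_i \sigma_i^{(n)} \otimes \ket{i}\bra{i}$, where $\sigma_i^{(n)} = \sum_j a_j^{(i,n)} \rho_j^{(i)}$. Reading the Kraus operators off Fig.\ \ref{fig:linearOQW}, a single iteration yields, at an interior node $0 < i < N-1$,
\begin{equation}
\sigma_i^{(n+1)} = \omega\, U_{i-1}\, \sigma_{i-1}^{(n)}\, U_{i-1}^\dagger + \lambda\, U_i^\dagger\, \sigma_{i+1}^{(n)}\, U_i,
\end{equation}
with analogous expressions at $i=0$ and $i=N-1$ that also incorporate the self-loop contributions. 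Substituting the inductive hypothesis for $\sigma_{i-1}^{(n)}$ and $\sigma_{i+1}^{(n)}$ and linearity of conjugation give a sum over $j$ in which the coefficients are exactly $\omega a_j^{(i-1,n)} + \lambda a_j^{(i+1,n)}$, matching the recursion stated in the theorem.

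The crux of the argument is the pair of identities
\begin{equation}
U_{i-1}\, \rho_j^{(i-1)}\, U_{i-1}^\dagger = \rho_j^{(i)}, \qquad U_i^\dagger\, \rho_j^{(i+1)}\, U_i = \rho_j^{(i)},
\end{equation}
which say that prepending one more unitary to the chain indexed by the definition of $\rho_j^{(\cdot)}$ advances the node index by one. I would verify these by splitting on the three cases $j < i$, $j = i$, $j > i$ and using the telescoping property of $U_{[a,b]} = U_b U_{b-1}\cdots U_a$. For example, if $j < i-1$ then $\rho_j^{(i-1)} = U_{[j,i-1]}\rho_j U_{[j,i-1]}^\dagger$ and multiplying by $U_{i-1}$ on the left turns $U_{[j,i-1]}$ into $U_{[j,i]}$; the cases $j = i-1, i$ involve the transition to and from the diagonal entry $\rho_j^{(j)} = \rho_j$ and are handled directly.

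I expect the main obstacle to be bookkeeping: making sure that the index shifts in the two identities above are consistent with the convention $U_{[i,j]} = U_j U_{j-1}\cdots U_i$, particularly at the junctions $j = i-1$ and $j = i$ where the case distinction in the piecewise definition of $\rho_j^{(i)}$ switches. Once those identities are in hand, the inductive step for interior nodes is immediate, and the boundary recursions at $i=0$ and $i=N-1$ follow from the same computation by noting that the self-loop Kraus operators $\sqrt{\lambda}\, I$ and $\sqrt{\omega}\, I$ act trivially on the internal Hilbert space and therefore only contribute to the coefficient update, leaving the operators $\rho_j^{(0)}$ and $\rho_j^{(N-1)}$ intact.
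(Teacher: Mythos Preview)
Your proposal is correct and takes essentially the same approach as the paper: the paper states that Theorem~\ref{thm:recursive} is ``a direct calculation, entirely analogous to the calculation made in Theorem~\ref{thm:correctness},'' and that theorem is proved by exactly the inductive scheme you outline---assume the diagonal form at step $n$, apply the Kraus operators of Fig.~\ref{fig:linearOQW}, and read off the updated coefficients. Your identification of the two conjugation identities $U_{i-1}\,\rho_j^{(i-1)}\,U_{i-1}^\dagger=\rho_j^{(i)}$ and $U_i^\dagger\,\rho_j^{(i+1)}\,U_i=\rho_j^{(i)}$ as the crux, and of the index bookkeeping at the case boundaries as the only real obstacle, is spot on.
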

This theorem is a direct calculation, entirely analogous to the calculation made in Theorem \ref{thm:correctness} in the Appendix \ref{app:proofs}. This Theorem is better understood as follows: Instead of evolving the whole sum $\sum_i \rho_i \otimes \ket{i}\bra{i}$, we first evolve each $\rho_j \otimes \ket{j}\bra{j}$ and then sum the results of their evolution. This can obviously be done since the evolution is linear, but it is important to emphasize the recursive relations for the contribution of the state of each node.

\par From the recursive relations in Theorem \ref{thm:recursive} we have $a_j^{(i,n)}\to p_j$ as $n \to \infty$ for all $i$, where $\{p_j\}$ are the steady-state probabilities. It is important to note that the convergence of these coefficients is independent of the initial state. We define $\ket{\psi}^{[n]} = U_{n}\cdots U_0\ket{\psi} = U^{[n]}\ket{\psi}$ for $n \geq 0$ and $U^{[-1]} = \mathds{1}$ for simplicity of notation. Therefore, in the long-term run, the part of the state corresponding to the last node is
\begin{equation}
    p_{N-1}(U^{[N-2]}\rho_{0}U^{[N-2]\dagger}+\cdots + \rho_{N-1}) \otimes \ket{N-1}\bra{N-1}.
\end{equation}
We conclude that after we post-select the result in this node, we obtain the state 
\begin{equation}
U^{[N-2]}\rho_{0}U^{[N-2]\dagger}+\cdots + \rho_{N-1}.
\end{equation}
Choosing $\rho_j, U_j$ appropriately (there is more than one way), this state can be made any convex combination
\begin{equation}
    \sum_i q_i V_i \rho V_i^\dagger
\end{equation}
of chosen unitaries $V_i$, where $\rho$ is a chosen density matrix and $q_i \in [0,1]$, $\sum_iq_i = 1$. In this way, we can express many quantum channels in the OQW framework of quantum computation in many different forms, some more efficient than others. This type of convex combination is called a Random Unitary Evolution. Observe that this procedure does not give us the most efficient way to implement convex combinations of unitaries in OQW; it only shows us that we can implement them in principle. In this way, we can also implement any single qubit unital map (i.e., quantum maps possessing the identity operator as a fixed point), as every single qubit unital map is a Random Unitary Evolution \cite{Chru13,Chru15}. Now we proceed to implement some important quantum channels in the OQW framework.

\subsection{Dephasing and Depolarizing channels}
\label{channels}
We can model single-qubit dephasing and depolarizing channels \cite{QCI} using this quantum computation model. We use the Pauli representation of the channels because Pauli operations are typically physically implementable on quantum computers. The dephasing channel is governed by the quantum operation
\begin{equation}
    \Delta_{\text{deph}}(\rho) = (1-p) \rho + p X\rho X,
\end{equation}
which admits a representation with Kraus operators $K_0 = \sqrt{1-p}\mathds{1}$ and $K_1 = \sqrt{p} X$.

\par The depolarizing channel is governed by 
\begin{equation}
    \Delta_{\text{depol}}^{(\lambda)}(\rho) =  (1-\lambda)\rho + \dfrac{\lambda}{d}\mathds{1},
\end{equation}
where $0 \leq \lambda \leq 1$, which admits a representation with Kraus operators $K_0 = \sqrt{1-\dfrac{3\lambda}{4}}\mathds{1}$, $K_1 = \sqrt{\dfrac{\lambda}{4}}X$, $K_2 = \sqrt{\dfrac{\lambda}{4}}Y$, and $K_3 = \sqrt{\dfrac{\lambda}{4}}Z$.
\par Let us first analyze the dephasing channel. Its model is shown in Fig. \ref{fig:dephasing}. We start with the following initial state
\begin{equation}
    \rho^{[0]} = p\ket{\psi}\bra{\psi}\otimes \ket{0}\bra{0} + (1-p) \ket{\psi}\bra{\psi}\otimes \ket{1}\bra{1}.
\end{equation} 
After one step, we have
\begin{equation}
\begin{split}
    \rho^{[1]} &= 1/2(p\ket{\psi}\bra{\psi} + (1-p)Z\ket{\psi}\bra{\psi} Z) \otimes \ket{0}\bra{0} \\
    & + 1/2((1-p)\ket{\psi}\bra{\psi} + p Z\ket{\psi}\bra{\psi} Z) \otimes \ket{1}\bra{1}.
\end{split}
\end{equation}
Post-selecting the result on the last node, we acquire the state
\begin{equation}
\rho' = (1-p) \ket{\psi}\bra{\psi} + p Z\ket{\psi}\bra{\psi}Z
\end{equation}
which is exactly the action of the dephasing channel. Note that the environment of the OQW induces the dephasing in this case and that the steady state is reached after a single step.
\begin{figure}[b]
\[\begin{tikzcd}
	{\ket{0}} && {\ket{1}}
	\arrow["{\sqrt{\lambda} \mathds{1}}", shift left=2, from=1-1, to=1-1, loop, in=55, out=125, distance=10mm]
	\arrow["{\sqrt{\omega}Z}", curve={height=-12pt}, from=1-1, to=1-3]
	\arrow["{\sqrt{\lambda} Z}", curve={height=-12pt}, from=1-3, to=1-1]
	\arrow["{\sqrt{\omega} \mathds{1}}", from=1-3, to=1-3, loop, in=55, out=125, distance=10mm]
\end{tikzcd}\]
\caption{\label{fig:dephasing} OQW model to the dephasing channel. $X,Y,Z$ are the Pauli matrices and $\omega + \lambda = 1$.
}
\end{figure}
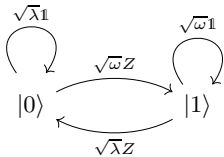

\par The depolarizing channel is modeled by the OQW in Fig. \ref{fig:depolarizing} with initial state 
\begin{equation}
\begin{split}
    \rho^{[0]} & = \left[\dfrac{\lambda}{4}\ket{\psi}\bra{\psi} + \dfrac{\lambda}{4} X\ket{\psi}\bra{\psi}X\right]\otimes \ket{0}\bra{0} \\
    & + \left[\dfrac{\lambda}{4}\ket{\psi}\bra{\psi}\right] \otimes \ket{1}\bra{1} \\
    & + \left[\left(1- \dfrac{3\lambda}{4}\right) \ket{\psi}\bra{\psi}\right]\otimes \ket{2}\bra{2}.
\end{split}
\end{equation}
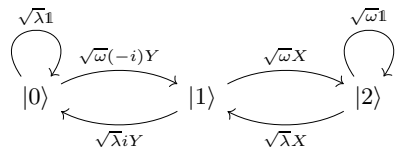
\begin{figure}[b]
\[\begin{tikzcd}
	{\ket{0}} && {\ket{1}} && {\ket{2}}
	\arrow["{\sqrt{\lambda} \mathds{1}}", from=1-1, to=1-1, loop, in=55, out=125, distance=10mm]
	\arrow["{\sqrt{\omega} (-i)Y}", curve={height=-12pt}, from=1-1, to=1-3]
	\arrow["{\sqrt{\lambda} i Y}", curve={height=-12pt}, from=1-3, to=1-1]
	\arrow["{\sqrt{\omega} X}", curve={height=-12pt}, from=1-3, to=1-5]
	\arrow["{\sqrt{\lambda} X}", curve={height=-12pt}, from=1-5, to=1-3]
	\arrow["{\sqrt{\omega} \mathds{1}}", from=1-5, to=1-5, loop, in=55, out=125, distance=10mm]
\end{tikzcd}\]
\caption{\label{fig:depolarizing} OQW model to the depolarizing channel. $X,Y,Z$ are the Pauli matrices and $\omega + \lambda = 1$.
}
\end{figure}
In the long-term run, the post-selected state in the last node is
\begin{equation}
\begin{split}
    \rho ' & = \dfrac{\lambda}{4}X\ket{\psi}\bra{\psi}X + \dfrac{\lambda}{4}Y\ket{\psi}\bra{\psi}Y  \\
    & +\dfrac{\lambda}{4}Z\ket{\psi}\bra{\psi}Z  + \left(1-\dfrac{3\lambda}{4} \right) \ket{\psi}\bra{\psi} 
    \end{split}
\end{equation}
which is exactly the action of the depolarizing channel. Observe that by exploiting the commutation relations of the Pauli matrices we have a more efficient implementation in terms of graph size compared to the general method described in \ref{subsec:convexcomb}. Mathematically, one can start from a more complex initial state and employ the same OQW used for the dephasing channel to construct the depolarizing channel; however, this approach is inefficient, as it incurs significant overhead. Note that both the depolarizing and dephasing channel results do not depend on $\omega$. This is because we are post-selecting the result measured in the last node. Consequently, we are erasing the information inherited from the walk process.

\section{Simulation Models}
\label{sec:simulation}
We now present a simulation protocol specifically designed for the OQW model of quantum computation. This protocol takes advantage of the inherent locality of the graph to allow for an efficient simulation process. We compare our circuit implementation with other methods, such as the Stinespring dilation and Sz.-Nagy dilation.

\subsection{Stinespring dilation}
\label{subsec:Stinespring}
\par One common method to simulate the dynamics of an open quantum system is to explore the Stinespring dilation \cite{Suri23}, which considers the evolution of the open quantum system as part of the unitary evolution of a larger system. Given Kraus operators $\{K_i\}_{i =1}^m$ for the evolution of a quantum state $\ket{\psi} \in \mathcal{H}'$, the Stinespring procedure gives us a unitary
\begin{equation}
    U_{St} = \begin{bmatrix}
        K_1 & \cdots & \cdots \\
        K_2 & \cdots & \cdots \\
        \vdots & \dots & \vdots \\
       K_m & \ddots & \dots \\
    \end{bmatrix}
\end{equation}
acting on a larger space $\mathcal{C} \otimes \mathcal{H}'$ (where $\mathcal{C}$ is a Hilbert space with dimension $m$, called ancillary system) satisfying
\begin{equation}
    U_{St}\ket{0} \otimes \ket{\psi} = \sum_{i=1}^{m} \ket{i} \otimes K_i \ket{\psi}.
\end{equation}
Note that when we trace out the ancillary system, we have exactly the evolution of the open quantum system that we wanted. Therefore, in order to implement $U_{St}$ in the circuit model, we need the system's dimension to be of the order $\mathcal{O}(dm)$, where $d = \text{dim}\mathcal{H}'$. In our linear OQW model, we have $m = 2|G|$ Kraus operators (two for each node), and they all act on $\mathcal{H}\otimes \mathcal{G}$, which has dimension $d = \text{dim}(\mathcal{H})|G|$.
\subsection{Sz.-Nagy dilation}
Suppose that, under the same conditions as in the Stinespring procedure, we wish to simulate the system's evolution by simulating the Kraus operators separately instead of all with the same unitary. This can be done with the Sz.-Nagy dilation procedure \cite{Suri23}: Given a $K_i$, we define a unitary
\begin{equation}
    U_{i,Sz} = \begin{bmatrix}
       K_i & \sqrt{I -K_i K_i^\dagger}  \\
        \sqrt{I - K_i^{\dagger}K_i} & -K_i^{\dagger}
    \end{bmatrix}.
\end{equation}
This single dilation has dimension $\mathcal{O}(d)$ (the dimension of the matrix is $2d$). Since we need one dilation for each of the $m$ Kraus operators, the total dimension needed is $\mathcal{O}(md)$, which is the same as needed in the Stinespring procedure.

\subsection{Dilation through locality}
\label{subsec:localsim}
We use the locality of the walk to propose a more efficient way to perform the simulation. We define the unitary 
$$U_{Loc}\colon \mathcal{H} \otimes \mathcal{G} \otimes \mathcal{A} \to \mathcal{H} \otimes \mathcal{G} \otimes \mathcal{A} $$
where $\mathcal{H}$ is the walker space, $\mathcal{G}$ is the graph space, $\mathcal{A}$ is an ancilla qubit, and
\begin{equation}
U_{Loc}\ket{\psi}\ket{i} \ket{j} = \begin{cases}
U_i\ket{\psi} \ket{i+1} \ket{1} &\text{if $i < N-1, j=1$};\\
U_{i-1}^\dagger\ket{\psi}\ket{i-1}\ket{0} &\text{if $i  >0, j=0$};\\
\ket{\psi}\ket{N-1}\ket{0} &\text{if $i = N-1, j = 1$}; \\
\ket{\psi}\ket{0}\ket{1} &\text{if i= 0, j = 0} .
\end{cases}
\end{equation}
This unitary is constructed with the following reasoning: If the ancilla is in state $\ket{1}$, the walker goes right or stays still in the right boundary (applying the respective unitary), and if the ancilla is in state $\ket{0}$, the walker goes left or stay still in the left boundary (applying the respective inverse unitary). However, in order to make $U_{Loc}$ unitary, we need to adjust the values of the ancilla qubit at the boundaries $i = 0, N-1$.

\par For each step, we need the qubit $\mathcal{A}$ initialized in the state $\rho_a = \lambda \ket{0}\bra{0} + \omega \ket{1}\bra{1}$. We then apply $U_{Loc}$ and trace $\mathcal{A}$. The state after repeating this procedure $n$ times is the state of the OQW after $n$ steps, as shown in Theorem \ref{thm:correctness} in Appendix \ref{app:proofs}. If we had an infinite (or circular) OQW, this same dilation structure would also work just by ignoring the boundary corrections. It is important to note that the state $\lambda \ket{0}\bra{0} + \omega \ket{1}\bra{1}$ can be initialized by performing a non-selective measurement on a qubit prepared in the state $\sqrt{\lambda}\ket{0} + \sqrt{\omega}\ket{1}$. \\
\par This dilation procedure can be easily generalized for an OQW following the following conditions:
\begin{enumerate}
    \item For each node $\ket{i}$, there are exactly a fixed number $1 \leq k \leq N$ of operators $B_i^j$;
    \item For all $i,j$, we have $B_i^j = \sqrt{w_{i,j}}U_{i,j}$, where $U_{i,j}$ is a unitary and $w_{i,j} \in [0,1]$ satisfying $\sum_j w_{i,j} = 1$;
    \item For $i,i'$, we have the set equality $\{w_{i,j}\}_j = \{w_{i',j}\}_j$.
\end{enumerate}
\par The only change to our algorithm is that we add more cases to the jump possibilities (we describe in detail a unitary with two possibilities: left or right) and adjust the jump probabilities using a qudit of dimension $k$ prepared in the state $\sqrt{w_0}\ket{0} + \cdots \sqrt{w_{k-1}} \ket{k-1}$ and performing a non-selective measurement, where $w_i$ are the proportion weight of the unitaries $U_{i,j}$. We obtain the Stinespring realization procedure in the worst-case scenario ($k = N$).

\subsection{Computational Complexity Analysis}
Note that we obtain a system with a smaller dimension than the other general simulation methods because we explore the locality of the graph data structures. In the Stinespring and Sz.-Nagy procedure, we need dimension $\mathcal{O}(n\text{dim}\mathcal{H}|G|^2)$, where $n$ is the number of steps in the walk, while the dimension of our proposed protocol is $\mathcal{O}(n\text{dim}\mathcal{H}|G|)$.
\par We can use well-established results on isometries \cite{Suri23, QCI,iten16} to estimate the circuit's depth and count of CNOT gates. In the worst-case scenario, the CNOT count and circuit depth are proportional. Therefore, they share an asymptotic upper bound. Now we analyze the asymptotic behavior of the CNOT count. Using column-by-column isometry \cite{iten16}, the CNOT count of the Stinespring dilation is $\mathcal{O}(md^2)$ which is $\mathcal{O}(\text{dim}\mathcal{H}^2 |G|^3)$ for one step in our case. The circuit CNOT count for one Kraus operator, using Sz.-Nagy dilation, is $\mathcal{O}(d^2)$, which is then $\mathcal{O}(md^2)$ to implement all $m$ Kraus operators. In our case, this reduces to $\mathcal{O}(\text{dim}\mathcal{H}^2 |G|^3)$, which is the exact CNOT count as in the Stinespring dilation. Note, however, that since all Kraus operators can be simulated in parallel using the Sz.-Nagy method, the asymptotic circuit depth is not as big as the Stinespring method \cite{Azevedo2025}. Using the same estimations, we see that our unitary $U_{Loc}$ has CNOT gate count $\mathcal{O}(\text{dim}\mathcal{H}^2 |G|^2)$, which is a substantial gain for large graphs or a large number of steps. For one step, in Stinespring and Sz.-Nagy, we needed dimension $\mathcal{O}(\text{dim}\mathcal{H}|G|^2)$, while for $U_{Loc}$ we only need $\mathcal{O}(\text{dim}\mathcal{H}|G|)$. Note that it is a linear gain of order $|G|$ for all quantities that we are measuring.

\subsection{Circuit Model}
Now we will show how to construct the circuits that can implement the OQW, in order to efficiently simulate this model in a perfect circuit-based quantum computer. The need for efficient ways to simulate this model comes from computational and physical points of view. For computation, it is interesting to know how to translate between different computational languages, while for physics, OQW represent an important class of physical systems which are types of quantum walk immersed in thermal bathes in which the bath induces the walk process \cite{Petr15, Petr14}.
\label{subsec:circuit}
We define the following unitaries
\begin{equation}
S \ket{i} = \begin{cases}
\ket{i-1} &\text{if $i < N-1$}\\
\ket{0} &\text{if $i = N-1$}
\end{cases}
\end{equation}
\begin{equation}
P \ket{i} = \begin{cases}
\ket{i-1} &\text{if $i > 0$}\\
\ket{N-1} &\text{if $i = 0$}
\end{cases}
\end{equation}
that are the successor and predecessor operators for our OQW. For the sake of simplicity, we assume that $|G| = 2^g$ and $\text{dim}\mathcal{H} = 2^h$. The general description of the algorithm can be found in Appendix \ref{app:general}. Each node $\ket{i}$ of the graph is encoded as its binary representation $\ket{i_{g-1}\dots i_0}$, where $i = \sum_k i_k 2^k$. In the circuit representation $S$ and $A$ are, respectively, as follows (for $g = 4$): 

\[ \Qcircuit @C=2em @R=1em {
&  \qw & \gate{X} & \qw & \qw  & \qw & \qw\\
&  \qw & \ctrl{-1} & \gate{X} & \qw & \qw & \qw\\
&  \qw & \ctrl{-2} & \ctrl{-1} & \gate{X} & \qw & \qw\\
& \qw & \ctrl{-3} & \ctrl{-2} & \ctrl{-1} & \gate{X} & \qw 
} \]

\[ \Qcircuit @C=2em @R=1em {
& \qw & \qw & \qw & \qw & \gate{X} & \qw \\
&  \qw & \qw & \qw & \gate{X} & \ctrl{-1} & \qw\\
& \qw & \qw & \gate{X} & \ctrl{-1} & \ctrl{-2} & \qw \\
&  \qw & \gate{X} & \ctrl{-1} & \ctrl{-2} & \ctrl{-3} & \qw
} \]

\par The main idea behind $U_{Loc}$ is to conditionally apply unitaries based on the position of the walker, so we need to transport this idea to the circuit model. For example, suppose $g = 2, h = 1$. First, we define a circuit $\mathcal{R}$ that represents the right movements corresponding to the nodes on the graph ($q_H$ is the register for $\mathcal{H}$, $q_{G,1},q_{G,2}$ are the registers for $\mathcal{G}$, $q_A$ is the register for $\mathcal{A}$ and $q_{A'}$ is the register for $\mathcal{A}'$):

\[ \Qcircuit @C=2em @R=1em {
& \lstick{q_H} &  \gate{U_0} & \gate{U_1} & \gate{U_2}  &  \qw  & \qw        \\
& \lstick{q_{G,1}}  & \ctrlo{-1} & \ctrlo{-1} & \ctrl{-1}  &    \multigate{1}{S} & \qw          \\
& \lstick{q_{G,2}} &   \ctrlo{-2} & \ctrl{-2} & \ctrlo{-2}  &   \ghost{S} & \qw  \\
& \lstick{q_A} &  \ctrl{-3} & \ctrl{-3} & \ctrl{-3} & \ctrl{-1} & \qw 
}  \]
\\
\\
In the same way, we define a circuit $\mathcal{L}$ for the left movements:

\[ \Qcircuit @C=2em @R=1em {
& \lstick{q_H} &  \gate{U_0^\dagger} & \gate{U_1^\dagger} & \gate{U_2^{\dagger}}  &  \qw  & \qw        \\
& \lstick{q_{G,1}} &  \ctrlo{-1} & \ctrl{-1} & \ctrl{-1}  &    \multigate{1}{P} & \qw          \\
& \lstick{q_{G,2}} & \ctrl{-2} & \ctrlo{-2} & \ctrl{-2}  &   \ghost{P} & \qw  \\
& \lstick{q_A} &  \ctrlo{-3} & \ctrlo{-3} & \ctrlo{-3} & \ctrlo{-1} & \qw 
} \]

\par Now, we need to adjust the circuit output in the boundaries. We will use an additional qubit $\mathcal{A}'$ initialized in the state $\ket{0}$ to implement the boundary terms in the circuit. This extra qubit does not increase the previous asymptotic analysis, since it is an extra qubit per step. Therefore, this circuit is not the direct implementation of $U_{Loc}$, but a slight modification. For the movements to the right, we have the circuit $\mathcal{R}_{\text{b}}$:
\[ \Qcircuit @C=2em @R=1em {
& \lstick{q_H} & \qw & \multigate{3}{\mathcal{R}} & \qw & \qw &  \qw      \\
& \lstick{q_{G,1}} & \ctrl{3} & \ghost{\mathcal{R}} & \multigate{1}{P}  & \ctrl{3} & \qw \\
& \lstick{q_{G,2}} & \ctrl{2} & \ghost{\mathcal{R}} &  \ghost{P} & \ctrl{2}   & \qw \\
& \lstick{q_A} & \ctrl{1} & \ghost{\mathcal{R}} & \qw & \ctrl{1}  & \qw \\
& \lstick{q_{A'}} & \targ & \qw & \ctrl{-2}  & \targ & \qw
}\]
where the subscript $b$ stands for "boundary", indicating that it is the circuit with terms correcting the boundary movements. For the movements to the left, we have the circuit $\mathcal{L}_{\text{b}}$:
\[ \Qcircuit @C=2em @R=1em {
& \lstick{q_H} & \qw & \multigate{3}{\mathcal{L}} & \qw & \qw &  \qw      \\
& \lstick{q_{G,1}} & \ctrlo{3} & \ghost{\mathcal{L}} & \multigate{1}{S}  & \ctrlo{3} & \qw \\
& \lstick{q_{G,2}} & \ctrlo{2} & \ghost{\mathcal{L}} &  \ghost{S} & \ctrlo{2}   & \qw \\
& \lstick{q_A} & \ctrlo{1} & \ghost{\mathcal{L}} & \qw & \ctrlo{1}  & \qw \\
& \lstick{q_{A'}} & \targ & \qw & \ctrl{-2}  & \targ & \qw
}\]

\par Our circuit for one step is then the concatenation of $\mathcal{R}_b$ with $\mathcal{L}_b$ after the initialization of $q_A$ in state $\sqrt{\lambda}\ket{0} + \sqrt{\omega}\ket{1}$ and a non-selective measurement in this same qubit. After this step, we trace the qubits $\mathcal{A}$ and $\mathcal{A}'$ and use another two to repeat the procedure. 
\par One can find the detailed description of the algorithm for the most general case in Appendix \ref{app:general}. 

\par We now give a schematic representation to summarize the one-step procedure:
\[ \Qcircuit @C=1em @R=.7em {
& \lstick{\ket{\varphi}}  & \multigate{3}{\mathcal{L}_\text{b}} & \multigate{3}{\mathcal{R}_\text{b}} & \qw \\
& \lstick{\ket{i}} & \ghost{\mathcal{L}_\text{b}} & \ghost{\mathcal{R}_\text{b}} & \qw \\
& \lstick{\rho_a}  & \ghost{\mathcal{L}_\text{b}} & \ghost{\mathcal{R}_\text{b}} & \qw \\
& \lstick{\ket{0}} & \ghost{\mathcal{L}_\text{b}} & \ghost{\mathcal{R}_\text{b}} & \qw
}\]
where $\rho_a = \lambda \ket{0}\bra{0} + \omega \ket{1}\bra{1}$, $\ket{\varphi} \in \mathcal{H}$ is the state of the walker and $\ket{i} \in \mathcal{G}$ is the node in which the walker is located. This circuit works as follows:
\begin{enumerate}
    \item The part $\ket{0}\bra{0}$ of the mixture $\rho_a$ induces only $\mathcal{L}_{\text{b}}$, making the state $\ket{\varphi}\ket{i}$ change to $U_{i-1}^\dagger\ket{\varphi}\ket{i-1}$ if $i > 1$ and to $\ket{\varphi}\ket{0}$ if $i = 0$;
    \item The part $\ket{1}\bra{1}$ of the mixture $\rho_a$ induces only $\mathcal{R}_{\text{b}}$, making the state $\ket{\varphi}\ket{i}$ change to $U_{i}\ket{\varphi}\ket{i+1}$ if $i < N-1$ and to $\ket{\varphi}\ket{N-1}$ if $i = N-1$.
\end{enumerate}
Therefore, starting the register $\mathcal{A}$ as $\rho_a$ makes the circuit change from $\ket{\varphi}\ket{i}$ to its corresponding state after evolving one step in the open quantum walk.

\subsection{Simulation results}
Here, we simulate the evolution of the OQW model of quantum computation choosing random unitaries in Qiskit with our proposed circuit implementation and compare the system's dimension, CNOT and one-qubit gate count and circuit's depth with the evolution simulated using the Stinespring method, as shown in Tables \ref{tab:table1}, \ref{tab:table2}, \ref{tab:table3}. No optimization was used in the \texttt{transpile} routine \cite{Qiskit2024,QiskitTerraTranspile}, and the transpiled circuit only has CNOTs and one-qubit gates. The Stinespring method was simulated using the method  \texttt{Isometry} from Qiskit~\cite{Qiskit2024,QiskitTerraIsometry}. Other state of the art simulation methods based on Singular Value Decomposition \cite{Suri23} and Sz.-Nagy dilation are very different from our method; they need either a classical postprocessing or are probabilistic methods, while Stinespring method and our local unitary method are expressed as a large unitary that produces the desired quantum state deterministically after tracing out the ancillary system, without the need of any classical postprocessing to construct the density matrix. That is why we chose to compare only with the closest method. It is important to notice that there are efforts in combining efficiently the Sz.-Nagy dilation into one circuit \cite{Azevedo2025}. For $|G| = 2$, we have the CNOT and one-qubit gate count of order $10^4$ and the circuit depth of order $2\times 10^4$ for both methods. The system's dimension needed for Stinespring's method is exactly twice the dimension of our proposed circuit. This means that the circuit for Stinespring will have one qubit more than the circuit of our method. For $|G| = 4$ we still do not see gains other than a reduction in the dimension of the system, with both the CNOT count and the one-qubit gate count of order $8\times 10^4$, and both depths are around $1.5\times 10^5$. The dimension for Stinespring method in this case is exactly four times the dimension of our proposed circuit, meaning that Stinespring's circuit has 2 more qubits than ours. For $|G| = 16$, the gains are visible: The Stinespring circuit has CNOT and one-qubit gate count of order $1.3\times 10^6$, while our method has both counts of order $3 \times 10^5$. The depth of the Stinespring circuit is of the order $2.7\times 10^6$, while our method is $6\times 10^5$. The dimension of Stinespring's circuit is eight times the dimension of our proposed circuit, meaning that it has 3 qubits more than ours. 

\par Since classical simulation does not allow us to simulate large graphs with a large number of steps, the circuit's depth and the CNOT count appear with small gains for $|G| = 2,4$. Nevertheless, the reduction of dimension is visible even in small graphs. We maintain $\text{dim}\mathcal{H} = 2$ fixed throughout our simulations because the gains of our model in the system's dimension, CNOT count, and circuit depth are given in terms of $|G|$, i.e., the contribution of $\text{dim}\mathcal{H}$ for the metrics that we analyzed (CNOT count, circuit depth, and one-qubit gate count) is the same for Stinespring's method and our method.

\begin{table*}
\caption{\label{tab:table1} Comparison between Stinespring method of simulation and our proposed method for $\text{dim}(\mathcal{H}) = 2, |G| = 4, \omega = 0.6$. The unitaries $U_i$ were chosen randomly. By our estimative, we use $n_{steps} = 21$.}
\begin{ruledtabular}
\begin{tabular}{ccddd}
 &CNOT Count &\mbox{One-qubit gate count}&\mbox{Circuit's Depth}&\mbox{System's Dimension}\\
\hline
Stinespring &10458&11676&20769&1344\\
$U_{Loc}$ &10206& 12831 & 19488 & 672\\
\end{tabular}
\end{ruledtabular}
\end{table*}

\begin{table*}
\caption{\label{tab:table2} Comparison between Stinespring method of simulation and our proposed method for $\text{dim}(\mathcal{H}) = 2, |G| = 8, \omega = 0.7$. The unitaries $U_i$ were chosen randomly. By our estimative, we use $n_{steps} = 21$.}
\begin{ruledtabular}
\begin{tabular}{ccddd}
 &CNOT Count&\mbox{One-qubit gate count}&\mbox{Circuit's Depth}&\mbox{System's Dimension}\\
\hline
Stinespring &87549& 91308& 175203 & 5376\\
$U_{Loc}$ &74592& 87381 & 144446 & 1344 \\
\end{tabular}
\end{ruledtabular}
\end{table*}

\begin{table*}
\caption{\label{tab:table3} Comparison between Stinespring method of simulation and our proposed method for $\text{dim}(\mathcal{H}) = 2, |G| = 16, \omega = 0.7$. The unitaries $U_i$ were chosen randomly. By our estimative, we use $n_{steps} = 41$.}
\begin{ruledtabular}
\begin{tabular}{ccddd}
 &CNOT Count&\mbox{One-qubit gate count}&\mbox{Circuit's Depth}&\mbox{System's Dimension}\\
\hline
Stinespring &1363537& 1382930& 2728550 & 41984\\
$U_{Loc}$ &293888& 334601 & 570728 & 5248 \\
\end{tabular}
\end{ruledtabular}
\end{table*}

\section{Discussion and Conclusion}
\label{sec:conclusion}
We have further developed the theory of time evolution of the OQW model of quantum computation, giving us an estimate on the number of steps $n_{steps}$ to be reasonably close to the steady-state, which is fundamental to design experiments and computationally feasible quantum simulations. Moreover, we have developed a way to canonically embed random unitary evolutions in the OQW framework, including important quantum channels used for modeling noise and decoherence, such as dephasing and depolarizing channels.
\par We also provided an efficient way to simulate the dynamics of this model (and showed how this can be generalized to a larger class of OQW) based on the locality property of the underlying graph, reducing depth, CNOT count, one-qubit gate count, and dimension by a factor of $|G|$, which is a considerable gain. Furthermore, we developed the circuit description and showed a substantial gain in the simulation results, opening paths to make the implementation of this type of dynamics driven on graphs feasible on real quantum computers.
\par Although the simulation results strongly support our theoretical analysis, it is important to state that we did not simulated the generalization of this protocol which was described at the end of Section \ref{subsec:localsim}. It is reasonable to expect that if the graph is fully connected, then locality loses its meaning, meaning that the generalized protocol should have worse results. The formalization and simulation of these claims is work to be done.
\par We also believe that the circuit can be much more optimized. This is due to the fact that we used the standard Qiskit transpilation methods to implement the multi-controlled unitaries for each step, while there are better optimization methods for this type of situation \cite{rosa25}. It is also our interest to dig deeper into some fundamental questions that are still not completely clear, such as the connection between OQW and quantum walks with decoherence and a more refined analysis of the evolution in more general topologies.

\begin{acknowledgments}
This study was financed in part by the Coordenação de Aperfeiçoamento de Pessoal de Nível Superior – Brazil (CAPES) – Finance Code 001.
\par N.K.B. acknowledges financial support from CNPq Brazil (442429/2023-1) and FAPESP(Grant 2021/06035-0).
\end{acknowledgments}
\appendix

\section{Mathematical statements and proofs}
\label{app:proofs}
Here we explore the formal statements and proofs of some claims that have been made in the text.

\begin{theorem}
    For $p_{suc}\geq \eta$, it suffices that $\omega \geq \dfrac{1}{2-\eta}$.
\end{theorem}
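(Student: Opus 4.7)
The plan is to reduce the statement to a simple algebraic inequality in $a=\omega/(1-\omega)$ by first bounding $p_{suc}$ from below by its $N\to\infty$ limit. Since the statement should hold for every $N$, we want a lower bound that is independent of $N$.

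First I would rewrite the success probability as
\begin{equation}
p_{suc} = \frac{a^{N-1}(a-1)}{a^{N}-1} = \frac{1-a^{-1}}{1-a^{-N}},
\end{equation}
valid for the regime of interest $\omega>\lambda$, i.e.\ $a>1$. From this form it is immediate that $1-a^{-N}\le 1$, so
\begin{equation}
p_{suc} \;\ge\; 1-\frac{1}{a} \;=\; \frac{a-1}{a},
\end{equation}
i.e.\ the finite-$N$ success probability is at least its $N\to\infty$ limit. This is the only nontrivial inequality in the argument; it can also be checked by observing that $a^{N}\ge a^{N}-1$ directly from the original formula.

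Next I would ask when the $N$-independent lower bound already exceeds $\eta$. The inequality $(a-1)/a\ge\eta$ is equivalent to $a(1-\eta)\ge 1$, i.e.\ $a\ge 1/(1-\eta)$. Substituting $a=\omega/(1-\omega)$ and clearing denominators, this becomes $\omega(1-\eta)\ge 1-\omega$, or equivalently $\omega(2-\eta)\ge 1$, so
\begin{equation}
\omega \;\ge\; \frac{1}{2-\eta}.
\end{equation}
Chaining the two implications gives: if $\omega\ge 1/(2-\eta)$, then $a\ge 1/(1-\eta)$, hence $(a-1)/a\ge\eta$, hence $p_{suc}\ge\eta$ for every $N$, which is what was claimed.

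I do not anticipate a serious obstacle: the only substantive step is the bound $p_{suc}\ge 1-1/a$, and the rest is solving a linear inequality. The only point worth double-checking is the case $\omega=1/2$ ($a=1$), where the closed form for $x_{m}$ must be read off as the limit $x_{m}=1/N$; there $p_{suc}=1/N$, consistent with the bound $\omega\ge 1/(2-\eta)$ requiring $\eta\le 0$ in this degenerate case.
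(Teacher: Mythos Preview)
Your proof is correct and follows essentially the same route as the paper: both arguments reduce the sufficient condition to $(1-\eta)a\ge 1$ and then solve for $\omega$. Your presentation is slightly cleaner in that you explicitly identify the $N$-independent lower bound $p_{suc}\ge 1-1/a$ as the $N\to\infty$ limit, whereas the paper arrives at the same condition by direct algebraic rearrangement of $p_{suc}\ge\eta$ into $a^{N-1}[(1-\eta)a-1]\ge -\eta$ and noting the left side is nonnegative once $(1-\eta)a\ge 1$.
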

\begin{proof}
    \begin{equation*}
        \begin{split}
            p_{suc} & = \dfrac{a^{N-1}(a-1)}{a^N-1} \geq \eta \\
            & \implies a^N-a^{N-1}\geq \eta a^N-\eta \\
            & \implies [(1-\eta)a^N - a^{N-1}]\geq -\eta \\
            & \implies a^{N-1}[(1-\eta)a - 1] \geq -\eta
        \end{split}
    \end{equation*}
Since $-\eta \leq 0 $ and $a^{N-1}\geq 0$, for the inequality to hold, it is sufficient to ask for 
\begin{equation}
(1-\eta)a-1 \geq 0.
\end{equation}
Since $a = \dfrac{\omega}{1-\omega}$, we obtain that it is sufficient to ask for
\begin{equation}
    \dfrac{\omega}{1-\omega} \geq \dfrac{1}{1-\eta}
\end{equation}
which is equivalent to
\begin{equation}
    \omega \geq \dfrac{1}{2-\eta}.
\end{equation}
\end{proof}

\begin{theorem}
\label{thm:correctness}
\textbf{(Correctness of the algorithm)} The realization procedure with $U_{Loc}$ in Section $\ref{subsec:localsim}$ gives us the correct dynamics for the OQW model of quantum computation.
\end{theorem}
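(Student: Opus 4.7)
The plan is to prove this by induction on the number of steps $n$, reducing everything to the claim that a single application of the dilation procedure on a diagonal state reproduces one application of the OQW map $\mathcal{M}$. The base case $n=0$ is trivial, and the inductive step is legitimate because, by the result from Ref.~\cite{Petr12(1)} cited earlier, $\mathcal{M}$ preserves states of the form $\sum_i \rho_i \otimes \ket{i}\bra{i}$; consequently, the output of the previous step is always in the class to which the one-step computation applies.

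First I would verify that $U_{Loc}$ is genuinely unitary. By inspecting the four cases in the definition, I would check that the map $\ket{\psi_\alpha}\ket{i}\ket{j}\mapsto\ldots$ is a bijection of the computational basis (up to the action of the unitaries $U_i$ on the $\mathcal{H}$ factor, which preserves orthonormality on each fibre). Specifically, outputs with ancilla $\ket{1}$ are produced either from an interior $(i,j=1)$ move or from the left-boundary correction at $(0,0)$, while outputs with ancilla $\ket{0}$ come either from an interior $(i,j=0)$ move or from the right-boundary correction at $(N-1,1)$. These four sources partition the output basis, which gives unitarity.

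Next I would directly compute $\mathrm{Tr}_{\mathcal{A}}\!\left[U_{Loc}(\rho_i\otimes\ket{i}\bra{i}\otimes\rho_a)U_{Loc}^\dagger\right]$ separately in the three regimes $0<i<N-1$, $i=0$, and $i=N-1$, with $\rho_a=\lambda\ket{0}\bra{0}+\omega\ket{1}\bra{1}$. In the interior case the $\ket{0}\bra{0}$ part of $\rho_a$ contributes $\lambda\,U_{i-1}^\dagger\rho_i U_{i-1}\otimes\ket{i-1}\bra{i-1}$ and the $\ket{1}\bra{1}$ part contributes $\omega\,U_i\rho_i U_i^\dagger\otimes\ket{i+1}\bra{i+1}$, matching exactly $\sum_j M_i^j(\rho_i\otimes\ket{i}\bra{i})M_i^{j\dagger}$ for the linear OQW of Fig.~\ref{fig:linearOQW}. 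At $i=0$ the boundary rule sends $\ket{\psi}\ket{0}\ket{0}$ to $\ket{\psi}\ket{0}\ket{1}$, producing $\lambda\rho_0\otimes\ket{0}\bra{0}+\omega U_0\rho_0 U_0^\dagger\otimes\ket{1}\bra{1}$; at $i=N-1$ one gets the mirror-image expression, and both agree with the corresponding terms of $\mathcal{M}$. Summing over $i$ by linearity then yields $\mathcal{M}(\rho)$ for an arbitrary diagonal $\rho$.

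The only mildly delicate point — and where I would be most careful — is the bookkeeping at the boundaries: unitarity of $U_{Loc}$ forces the ancilla to be \emph{relabelled} ($\ket{0}\leftrightarrow\ket{1}$) at the walls, and one has to confirm that this relabelling does not spoil the statistics of the next step. Because the protocol traces out $\mathcal{A}$ immediately and re-initializes a fresh copy in the state $\rho_a$ before the next application of $U_{Loc}$, the relabelling has no dynamical consequence. Combined with the preservation of the diagonal form under $\mathcal{M}$, induction on $n$ closes the argument and establishes equivalence with the recursion in Eq.~\ref{eqn:recursiveoqw}.
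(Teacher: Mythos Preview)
Your proposal is correct and follows essentially the same inductive strategy as the paper: both reduce the claim to verifying that one application of the dilation procedure reproduces one step of $\mathcal{M}$, then close by induction on $n$. The paper carries out the one-step computation by splitting over the ancilla value (Eqs.~\ref{eqn:big1} and \ref{eqn:big2}) and tracks the probability weights $p_i^{(n)}$ from Eq.~\ref{eqn:niteration}, whereas you split over the node index $i$ (interior versus the two boundaries); these are just two orderings of the same finite sum. Your version is slightly more general in that it treats an arbitrary diagonal input $\sum_i \rho_i\otimes\ket{i}\bra{i}$ rather than the specific form inherited from the pure initial state $\ket{\psi}\bra{\psi}\otimes\ket{0}\bra{0}$, and you add two checks the paper leaves implicit --- the unitarity of $U_{Loc}$ and the observation that the boundary ancilla flip is harmless because $\mathcal{A}$ is traced out and refreshed --- both of which are worth making explicit.
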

\begin{proof}
Let us call the $n$-th iteration of our realization procedure $L^{[n]}$. We wish to prove that $L^{[n]} = \rho^{[n]}, \forall n$. We proceed inductively. The cases $n=0,1$ are immediate. We know that $\rho^{[n]}$ has the form of Equation $\ref{eqn:niteration}$. Assume that $L^{[n]} = \rho^{[n]}$. For the next step, we have
\begin{equation}
    \rho^{[n+1]} = \sum_{i = 0}^{N-1} p_i' U_{i-1}\cdots U_0\ket{\psi}\bra{\psi}U_0^\dagger \cdots U_{i-1}^\dagger \otimes \ket{i}\bra{i}
\end{equation}
where $p_0' = \lambda p_0 + \lambda p_1$, $p_{N-1}' = \omega p_{N-2} + \omega p_{N-1}$, and for $0 < i < N-1$ we have
\begin{equation}
    p_i' = \omega p_{i-1} + \lambda p_{i+1}.
\end{equation}
Performing the local realization procedure, we have the following (where $\rho_a = \lambda \ket{0}\bra{0} + \omega \ket{1}\bra{1}$ is the ancillary qubit):
\begin{equation}
\begin{split}
    L^{[n+1]} & = \text{Tr}_2[ U_{Loc} (L^{[n]} \otimes \rho_a)U_{Loc}^\dagger]\\
    & = \text{Tr}_2[U_{Loc} (\rho^{[n]} \otimes \rho_a)U_{Loc}^\dagger].
\end{split}
\end{equation}
We obtain Equations \ref{eqn:big1}  and \ref{eqn:big2} from the direct application of $U_{Loc}$ to our state.
\begin{widetext}
\begin{align}
\label{eqn:big1}
    U_{Loc} \rho^{[n]}\otimes \lambda \ket{0}\bra{0}U_{Loc}^\dagger   = \lambda p_0 \ket{\psi}\bra{\psi}\otimes \ket{0}\bra{0} \otimes \ket{1}\bra{1} + \sum_{i > 0} \lambda p_i \ket{\psi}^{[n-2]}\bra{\psi}^{[n-2]} \otimes \ket{i-1}\bra{i-1} \otimes \ket{0}\bra{0}.
\end{align}

\begin{align}
\label{eqn:big2}
    U_{Loc} \rho^{[n]}\otimes \omega \ket{1}\bra{1}U_{Loc}^\dagger   = \omega p_{N-1} \ket{\psi}\bra{\psi}\otimes \ket{N-1}\bra{N-1} \otimes \ket{0}\bra{0}  + \sum_{i< N-1} \omega p_i \ket{\psi}^{[n]}\bra{\psi}^{[n]} \otimes \ket{i+1}\bra{i+1} \otimes \ket{1}\bra{1}.
\end{align}
\end{widetext}

By a direct comparison we find that the new probability weight for $L^{[n+1]}$ are exactly the corresponding weight of $\rho^{[n+1]}$. The result then follows by induction.
\end{proof}

\begin{theorem}
    The OQW model of quantum computation with $N = 2$ stabilizes after just one step.
\end{theorem}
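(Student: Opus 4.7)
The plan is to prove this by establishing that $\rho^{[1]}$ is a fixed point of the map $\mathcal{M}$ and then invoking the recursion $\rho^{[n+1]} = \mathcal{M}(\rho^{[n]})$ to conclude $\rho^{[n]} = \rho^{[1]}$ for all $n \geq 1$ by induction. So the whole task reduces to showing $\mathcal{M}(\rho^{[1]}) = \rho^{[1]}$, i.e., a single-step verification.

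First, I would rewrite the computed $\rho^{[1]}$ in a suggestive form. Setting $U = U_0$ and collecting the common prefactors, we have
\begin{equation*}
\rho^{[1]} = \lambda\,\rho_0' \otimes \ket{0}\bra{0} + \omega\,\rho_1' \otimes \ket{1}\bra{1},
\end{equation*}
where $\rho_0' = p\,\rho_0 + (1-p)\,U^\dagger \rho_1 U$ and $\rho_1' = p\,U\rho_0 U^\dagger + (1-p)\,\rho_1$. Each $\rho_j'$ has unit trace, and the crucial algebraic identity to spot is
\begin{equation*}
\rho_1' = U \rho_0'\, U^\dagger,
\end{equation*}
which follows from $UU^\dagger = \mathds{1}$ by a direct computation. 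This relation is the heart of the argument.

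Next, I would apply $\mathcal{M}$ to $\rho^{[1]}$. Since $\rho^{[1]}$ has exactly the same diagonal shape as a valid initial condition (with weights $\lambda$ and $\omega$ in place of $p$ and $1-p$), I can reuse the single-step formula from the preamble with the substitutions $p \mapsto \lambda$, $1-p \mapsto \omega$, $\rho_j \mapsto \rho_j'$. This yields
\begin{equation*}
\mathcal{M}(\rho^{[1]}) = \bigl[\lambda^2 \rho_0' + \omega\lambda\, U^\dagger \rho_1' U\bigr] \otimes \ket{0}\bra{0} + \bigl[\omega\lambda\, U\rho_0' U^\dagger + \omega^2 \rho_1'\bigr] \otimes \ket{1}\bra{1}.
\end{equation*}
Using the identity $U^\dagger \rho_1' U = \rho_0'$ (equivalently $U\rho_0' U^\dagger = \rho_1'$) collapses the $\ket{0}\bra{0}$ coefficient to $(\lambda^2 + \omega\lambda)\rho_0' = \lambda(\lambda+\omega)\rho_0' = \lambda \rho_0'$, and analogously the $\ket{1}\bra{1}$ coefficient becomes $\omega\rho_1'$. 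Thus $\mathcal{M}(\rho^{[1]}) = \rho^{[1]}$, and induction finishes the proof.

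The only real subtlety is recognizing the invariance $\rho_1' = U\rho_0' U^\dagger$; everything else is bookkeeping. This identity is special to $N=2$ because there is a single unitary $U_0$ bridging the two nodes, and the left–right symmetry of the two Kraus channels forces the post-step states at the two nodes to be related by that unitary. For $N > 2$ no analogous identity holds between consecutive nodes, which is consistent with the fact that genuine transient dynamics (Fig.~\ref{fig:probdist}) appear only when $N$ is large.
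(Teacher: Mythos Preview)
Your proof is correct and follows essentially the same route as the paper: both show $\mathcal{M}(\rho^{[1]})=\rho^{[1]}$ by direct computation and then use $\lambda+\omega=1$ to collapse the coefficients. Your packaging via $\rho_0',\rho_1'$ and the explicit identity $\rho_1'=U\rho_0'U^\dagger$ makes the structure a bit more transparent than the paper's line-by-line expansion in the original variables, but the underlying computation is identical.
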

\begin{proof}
    We already saw that 
\begin{equation}
\begin{split}
\rho^{[1]} & = [\lambda p \rho_0 + (1-p)\lambda U^{\dagger} \rho_1 U]\otimes \ket{0}\bra{0} \\
& + [\omega p U\rho_0 U^\dagger + \omega (1-p) \rho_1] \otimes \ket{1}\bra{1}.
\end{split}
\end{equation}
Applying the evolution operator again, we have
\begin{equation}
\begin{split}
\rho^{[2]} & = [\lambda^2 p \rho_0 + (1-p)\lambda^2 U^{\dagger} \rho_1 U]\otimes \ket{0}\bra{0} \\
& + [\lambda\omega p \rho_0  + \lambda\omega (1-p) U^{\dagger}\rho_1 U] \otimes \ket{0}\bra{0} \\
& + [\omega^2 p U \rho_0 U^\dagger + \omega^2(1-p)\rho_1] \otimes \ket{1}\bra{1} \\
& + [\omega \lambda p U \rho_0 U^\dagger + \omega \lambda(1-p)\rho_1] \otimes \ket{1}\bra{1}.
\end{split}
\end{equation}
Summing each correspondent term and using $\lambda^2 + \lambda \omega = \lambda (\lambda +\omega) = \lambda$ and $\omega^2 + \omega\lambda = \omega(\omega + \lambda) = \omega$, we have $\rho^{[2]} = \rho^{[1]}$.
\end{proof}

\section{Description of algorithm} 
\label{app:general}
In this section, we develop the algorithms for each component of the circuit model, just as described in Section \ref{subsec:circuit}. We implemented it on Qiskit, but it can be implemented in any Quantum Circuit environment. All quantum registers start in the state $\ket{0}$. Here we define 
\begin{equation}
    U_{\omega} = \begin{bmatrix}
       \sqrt{\lambda} & \sqrt{\omega}  \\
        \sqrt{\omega} & -\sqrt{\lambda}
    \end{bmatrix} = R_y(-2 \cdot \text{arcsin}(\sqrt\omega)),
\end{equation}
where $R_y$ is the spin rotation matrix around the $y$-axis 
\begin{equation}
    R_y(\theta) = e^{i \frac{\theta}{2}Y} = \begin{bmatrix}
      \text{cos}\frac{\theta}{2} &  \text{sin}\frac{\theta}{2}  \\
         -\text{sin}\frac{\theta}{2} & \text{cos}\frac{\theta}{2}
    \end{bmatrix},
\end{equation}
where $Y$ is the Pauli matrix
\begin{equation}
    Y = \begin{bmatrix}
      0 &  -i  \\
         i & 0
    \end{bmatrix}.
\end{equation}
\par We proceed to provide high-level pseudo-algorithm descriptions in modules for implementation of the circuits developed in Section \ref{subsec:circuit} in any quantum circuit framework, such as Qiskit, Circ, PennyLane, Strawberry Fields. We look forward to the implementation in Strawberry Fields and PennyLane due to potential applications of dissipative quantum computation in Machine Learning and Photonic Quantum Computation. One of our aims in this paper is to prove that our method has a better depth, CNOT count, and a lower dimension than the Stinespring method, so we implemented it only in Qiskit as a proof of concept.

\begin{algorithm}[H]
\caption{Initialize registers and unitaries}
\label{alg:register}
\begin{algorithmic}[1]
\State \textbf{initialize circuit} Start a quantum circuit with four quantum registers (i.e., four sets of qubits) $q_{H,i}$ ($1 \leq i \leq h$), $q_{G,j}$ ($1 \leq j \leq g$), $q_A$ and $q_{A'}$ for encoding the walker Hilbert space, the graph Hilbert space and the auxiliary qubits $A$ and $A'$ respectively. This registers have $h = \lceil \text{log}(\text{dim}(\mathcal{H})) \rceil$, $g = \lceil \text{log}(\text{dim}(\mathcal{G})) \rceil$, $1$ and $1$ qubits, respectively
\State \textbf{encoding spaces} Each basis vector $\ket{i}$ of each of the spaces is going to be encoded as the vector $\ket{i_{l-1}\dots i_0}$ in the respective quantum register $(l = h,g)$, where $i = \sum_k i_k 2^k$
\State \textbf{enconding unitaries} For each unitary $U_i$, we extend it to $U_i' = U_i \otimes \mathds{1}_{2^g-\text{dim}(\mathcal{H})}$ since $2^g$ can be bigger than $\text{dim}(\mathcal{H})$
\end{algorithmic}
\end{algorithm}
\begin{algorithm}[H]
\caption{Circuit $\mathcal{R}$}
\label{alg:R}
\begin{algorithmic}[1]
\State \textbf{initialize} Initialize registers and unitaries
\For{$\ket{j} = \ket{j_{g-1}\dots j_0}$ in register $q_{G,j}$ and $j < N-1$}
\State Add a multi-controlled $U_j'$ gate where the controls are the registers $q_{G,j_k}$ and $q_A$, and where the control bit in $q_A$ is the value $j_k$. The target qubits are the qubits in the register $q_{H,i}$
\EndFor

\State Add a multi-controlled $P$ gate where the control is $q_A$ with control bit 1, and the target are the qubits in $q_{G,j}$
\end{algorithmic}
\end{algorithm}
\begin{algorithm}[H]
\caption{Circuit $\mathcal{R}_b$}
\label{alg:Rb}
\begin{algorithmic}[1]
\State \textbf{initialize} Initialize registers and unitaries
\State Add a multi-CNOT gate, with control qubits the registers $q_{G,j}$ and $q_A$ (control bit is 1 for all qubits), and with target qubit the qubit $q_{A'}$
\State Add the circuit $\mathcal{R}$ in the registers $q_{H,i}$, $q_{G,j}$,$q_A$

\State Add a controlled $A$ gate with control $q_{A'}$ (control bit is 1) and target the register $q_{G,j}$
\State Add a multi-CNOT gate, with control qubits the registers $q_{G,j}$ and $q_A$ (control bit is 1 for all qubits), and with target qubit the qubit $q_{A'}$
\end{algorithmic}
\end{algorithm}
\begin{algorithm}[H]
\caption{Circuit $\mathcal{L}$}
\label{alg:L}
\begin{algorithmic}[1]
\State \textbf{initialize} Initialize registers and unitaries
\For{$\ket{j} = \ket{j_{g-1}\dots j_0}$ in register $q_{G,j}$ and $j > 0$}
\State Add a multi-controlled $U_j'^\dagger$ gate where the controls are the registers $q_{G,j_k}$  and $q_a$, and where the control bit in $q_a$ is $0$ and in $g_{G,j_k}$ is the value $j_k$. The target are the qubits in the register $q_{H,i}$
\EndFor

\State Add a multi-controlled $A$ gate where the control is $q_A$ with control bit 0, and the target are the qubits in $q_{G,j}$
\end{algorithmic}
\end{algorithm}
\begin{algorithm}[H]
\caption{Circuit $\mathcal{L}_b$}
\label{alg:Lb}
\begin{algorithmic}[1]
\State \textbf{initialize} Initialize registers and unitaries
\State Add a multi-CNOT gate, with control qubits the registers $q_{G,j}$ and $q_A$ (control bit is 0 for all qubits), and with target qubit the qubit $q_{A'}$
\State Add the circuit $\mathcal{L}$ in the registers $q_{H,i}$, $q_{G,j}$,$q_A$
\State Add a controlled $P$ gate with control $q_{A'}$ (control bit is 1) and target the register $q_{G,j}$

\State Add a multi-CNOT gate, with control qubits the registers $q_{G,j}$ and $q_A$ (control bit is 0 for all qubits), and with target qubit the qubit $q_{A'}$
\end{algorithmic}
\end{algorithm}
\begin{algorithm}[H]
\caption{Circuit for one step}
\label{alg:step}
\begin{algorithmic}[1]
\State \textbf{initialize} Initialize registers and unitaries
\State Add the unitary $U_\omega$ to qubit $q_A$
\State Add a non-selective measurement to qubit $q_A$
\State Append the circuit $\mathcal{R}_b$
\State Append the circuit $\mathcal{L}_b$
\end{algorithmic}
\end{algorithm}
\begin{algorithm}[H]
\caption{Simulation of $n$ steps}
\label{alg:locsim}
\begin{algorithmic}[1]
\State \textbf{initialize} Initialize registers and unitaries (here we need $n$ qubits in register $q_A$ and $n$ in register $q_{A'}$)
\For{$1 \leq i \leq N$}
\State Append the circuit for one step to the registers $q_{H,i}$, $q_{G,j}$, the $i$th qubit of $q_A$ and the $i$th qubit $q_{A'}$
\EndFor
\State Trace out the $2n$ qubits $q_A$, $q_{A'}$
\end{algorithmic}
\end{algorithm}

\nocite{*}

\bibliography{apssamp}

\end{document}